\documentclass[letterpaper, 10 pt, conference]{ieeeconf}  

\IEEEoverridecommandlockouts                              

\overrideIEEEmargins                                      


\newcommand{\R}{\mathbb{R}}
\newcommand{\N}{\mathbb{N}}

\usepackage{float}
\usepackage{amssymb}
\usepackage{amsmath}
\usepackage{bbold}
\usepackage{url}          
\usepackage{hyperref}
\usepackage{caption}
\usepackage{comment}
\usepackage[ruled]{algorithm2e}
\usepackage{float}
\usepackage{multicol}
\usepackage{color}
\usepackage{graphicx}
\newcounter{multifig}
\usepackage{thmtools,thm-restate}
\usepackage{multirow}

\newcommand\norm[1]{\left\lVert #1 \right\rVert} 

\usepackage{lipsum}

\usepackage[caption=false, font=footnotesize]{subfig}


\newcommand{\figcaption}[1]
{\stepcounter{multifig}
	\addcontentsline{lof}{figure}{\string\numberline {\arabic{multifig}}{\ignorespaces #1}}
	Figure \arabic{multifig}: #1}

\newtheorem{theorem}{Theorem}[section]

\newtheorem{definition}[theorem]{Definition}
\newtheorem{remark}[theorem]{Remark}

\newtheorem{assumption}[theorem]{Assumption}

\title{\LARGE \bf On the Control of Agents Coupled through Shared Unit-demand Resources}

\author{Syed Eqbal Alam$^\ast$\thanks{$^\ast$Concordia Institute for Information Systems Engineering, Concordia University, Montreal, Quebec, Canada}, 
	Robert Shorten$^\dagger$\thanks{$^\dagger$School of Electrical, Electronic
		and Communications Engineering, University College Dublin, Dublin, Ireland},
	Fabian Wirth$^\ddagger$\thanks{$^\ddagger$Faculty of Computer Science and Mathematics, University of Passau, Passau, Germany}, and
	Jia Yuan Yu$^\ast$}

\begin{document}
	
	\maketitle
	\thispagestyle{empty}
	\pagestyle{empty}
	
	\begin{abstract}
	    We consider a control problem involving several agents coupled through multiple unit-demand resources. Such resources are indivisible, and each agent's consumption is modeled as a Bernoulli random variable. Controlling the number of such agents in a probabilistic manner, subject to capacity constraints, is ubiquitous in smart cities. For instance, such agents can be humans in a feedback loop---who respond to a price signal, or automated decision-support systems that strive toward system-level goals. In this paper, we consider both single feedback loop corresponding to a single resource and multiple coupled feedback loops corresponding to multiple resources consumed by the same population of agents. For example, when a network of devices allocates resources to deliver several services, these services are coupled through capacity constraints on the resources. We propose a new algorithm with fundamental guarantees of convergence and optimality, as well as present an example illustrating its performance.
	\end{abstract} 
	\let\thefootnote\relax\footnotetext{The  work  is  partly  supported  by  Natural  Sciences  and  Engineering  Research  Council of Canada grant no. RGPIN-2018-05096.}	
	
	\textbf{\textit{Keywords---}distributed optimization, optimal control, multi-resource allocation, unit-demand resources, smart city, electric vehicle charging}
	
	\section{Introduction}
		Classical control has much to offer in a smart-city context. However, while this is, without doubt, true, many problems arising in the context of smart cities reveal subtle constraints that are relatively unexplored by the control community. At a high level, both classical control and smart-city control deal with regulation problems. Nevertheless, in many (perhaps most) smart-city applications, control involves orchestrating the aggregate effect of a number of agents who respond to a signal (sometimes called a {\em price}) in a probabilistic way. A fundamental difference between classical control and smart-city control is the need to study the effect of control signals on the statistical properties of the populations that we wish to influence, while at the same time ensuring that the control is in some sense optimal. This fundamental difference concerns the need of ergodic feedback systems, and even though this problem is rarely studied in control, it is the issue that is perhaps the most pressing in real-life applications; since the need for predictability at the level of individual agents underpins an operator's ability to write economic contracts.
	
	Our starting point for this work is the previous papers \cite{Griggs2016, Fioravanti2017}, and the observation that many problems in smart cities can be cast in a framework, where a large number of agents, such as people, cars, or machines, often with unknown objectives---compete for a limited resource. It is a challenge to allocate a resource in a manner that utilizes it optimally and gives a guaranteed level of service to each of the agents competing for that resource. For example, allocating parking spaces \cite{Arnott1999, Teodorovic2006, Lin2017}, regulating cars competing for shared road space \cite{Jones2014}, or allocating shared bikes \cite{Raviv2013, DeMaio2009}, are examples in which resource utilization should be maximized, while at the same time delivering a certain quality of service (QoS) to individual agents is a paramount constraint.  As we have noted in \cite{Fioravanti2017, Syed2018_B, Crisostomi2016}, at a high level, these are primarily optimal control problems but with the added objective of controlling the microscopic properties of the agent population. Thus, the design of feedback systems for deployment in smart cities must combine notions of regulation, optimization, and the existence of this unique invariant measure \cite{Fioravanti2019}.
	
	Specifically, in this paper, we consider the problem of controlling a number of agents coupled through multiple shared resources, where each agent demands the resources in a probabilistic manner. This work builds strongly on the previous work \cite{Griggs2016} in which the optimal control and ergodic control of a single population of agents are considered. As we have mentioned, controlling   networks of agents which demand resources in a {\em probabilistic manner} is ubiquitous in smart cities. In many smart-city applications, the probabilistic intent of agents can be natural (where humans are in a feedback loop and respond, for example, to a price signal), or designed (implemented in a decision support system) so that the network achieves system-level goals. Often, such feedback loops are coupled together, as agents contribute or participate in multiple services. For example, when a network of devices allocates resources to deliver several services, these services are coupled through the consumption of multiple shared resources; usually, we call such resources as {\em unit-demand resources} which are either allocated one unit of the resource or not allocated. A concrete manifestation of such a system is the IBM Research's project {\em parked cars as a service delivery platform} \cite{Cogill2014}. Here, networks of parked cars collaborate to offer services to city managers. Examples of services include {\em wifi coverage}, {\em finding missing objects}, and {\em gas leak detection and localization}.  Here, vehicle owners allocate parts of their resource stochastically to contribute to different services, each of which are managed by a feedback loop. The allocation between services is usually coupled via a nonlinear function that represents the trade-off between resource allocation (energy, sensors), and the reward for participating in delivering a particular service. We shall give a concrete example of such a system later in the paper. It is our firm belief that such systems are ubiquitous in smart cities, and represent a new class of problems in feedback systems.
	
	Our main contribution in this paper is to establish stochastic schemes for a practically important class of problems for several agents coupled through multiple unit-demand shared resources in coupled feedback loops. Each agent demands the unit-demand shared resources in a probabilistic manner based on its private cost function and constraints; the constraints are based on multiple unit-demand shared resources. This scheme is a generalization of the single unit-demand resource allocation algorithm proposed in \cite{Griggs2016} and follows more relaxed constraints than \cite{Syed2018_B}. Furthermore, the results of convergence, as well as optimality, are derived for networks with a single unit-demand resource; the results are further extended for multiple unit-demand resources. 
   	%
%
	\section{Preliminaries} \label{prob_form}
	Suppose that $n$ agents are coupled through $m$ resources $R^1, R^2, \ldots, R^m$ and each agent has a cost function that depends on the allocation of these resources in the closed coupled feedback loop. Let the desired value or capacity of resource $R^1, R^2, \ldots, R^m$ be $C^1, C^2, \ldots, C^m$, respectively. We denote $\mathcal{N} \triangleq \{1, 2, \ldots, n\}$, $\mathcal{M} \triangleq \{1, 2, \ldots, m\}$, and use $i \in \mathcal{N}$ as an index for agents and $j \in \mathcal{M}$ to index the resources. 
	Let $\xi^j_i(k)$ denote independent Bernoulli random variable which represents the instantaneous allocation of resource $R^j$ of agent $i$ at time step $k$. Furthermore, let $y_i^j(k) \in [0, 1]$ denote the average allocation of resource $R^j$ of agent $i$ at time step $k$. We define $y_i^j(k)$ as follows,
\begin{align} \label{eq:average_eqn_b1}
{y}_i^j(k) \triangleq \frac{1}{k+1} \sum_{\ell=0}^k \xi^j_i(\ell),
\end{align}
for $i = 1, 2, \ldots, n$, and $j = 1, 2, \ldots, m$. We assume that agent $i$ has a cost function $g_i: (0,1]^m \to \mathbb R_+$ which associates a cost to a
certain allotment of resources to the agent. We assume that $g_i$ is twice continuously differentiable, convex, and increasing in all variables, for all $i$. We also assume that the agents do not share their cost functions or allocation information with other agents. Then instead of defining the resource allocation problem in terms of the instantaneous allocation $\xi_i^j(k) \in \{0, 1\}$, for all $i,j$ and $k$, we define the objective and constraints in terms of averages as follows,
\begin{align}
\begin{split} \label{obj_fn1_b1}
\min_{y_1^1, \ldots, y_n^m} \quad &\sum_{i=1}^{n} g_i(y_i^1,\ldots,y_i^m),
\\ \mbox{subject to } \quad &\sum_{i=1}^{n} y_i^j  =  C^j, \quad  j=1,\ldots,m, 
\\  &y_i^j\geq 0, \quad i=1,\ldots,n, \  \text{ and}, \ j=1,\ldots,m.
\end{split}
\end{align}

Let $y^* = ({y}_1^{*1}, \ldots, {y}_n^{*m}) \in (0,1]^{nm}$ denote the solution to \eqref{obj_fn1_b1}. Let $\mathbb{N}$ denote the set of natural numbers, and let $k \in \mathbb{N}$ denote the time steps.
Next, our objective is to propose a distributed iterative algorithm that determines instantaneous allocation $\{\xi_i^j(k)\}$ and ensures that the long-term average allocation, as defined in \eqref{eq:average_eqn_b1} converge to optimal allocation as follows (treated in Section~\ref{bin_imp}), 
\begin{align*}
\lim_{k\to \infty} {y}_i^j(k) = {y}_i^{*j}, \text{ for } i=1,2,\ldots,n, \text{ and } j=1,2,\ldots,m,
\end{align*}
 thereby achieving the
minimum social cost in the sense of long-term averages.
By compactness of the constraint set, optimal
solutions exist. The assumption that the cost function $g_i$ is strictly convex leads to strict convexity of $\sum_{i=1}^{n} g_i$, which follows that the optimal solution is unique. 
	\subsection{Optimality conditions} \label{opt_cond_b1}
	Let $\mathcal{L}: \mathbb{R}^{nm}\times \mathbb{R}^m \times \mathbb{R}^m \to \mathbb{R}$, and let $\mu = (\mu^1,\mu^2,\ldots,\mu^m)$ and $\lambda = (\lambda^1,\lambda^2,\ldots,\lambda^m)$ are Lagrange multipliers of the resources. Then we define Lagrangian of Problem \eqref{obj_fn1_b1} as follows, 
	\begin{align*}
		&\mathcal{L}(y, \mu, \lambda) \triangleq \nonumber\\ & \sum_{i=1}^{n} g_i(y_i^1,
		\ldots, y_i^m) -\sum_{j=1}^{m}\mu^j (\sum_{i=1}^{n} y_i^j - C^j) + \sum_{j=1}^{m} \sum_{i=1}^{n} \lambda^j y_i^j.
	\end{align*} 
	Recall that $y_i^{*1}, \ldots, y_i^{*m} \in (0,1]$ are the optimal allocations of agent $i$ of Problem \eqref{obj_fn1_b1}, for $i = 1,2, \ldots, n$.  
	 Now, let $\nabla_j g_i$ denote (partial) derivative of the cost function $g_i$ with respect to resource $R^j$, for $j=1,2,\ldots,m$. Then following similar analysis as \cite{Syed2018_B}, we find that the derivatives of the cost functions of all agents competing for a particular resource reach consensus at optimal average allocations. That is, the following holds true, for $i,u \in \mathcal{N}, \mbox{ and } j \in \mathcal{M}$:
	\begin{align}\label{optimality_cond_b1}
		\nabla_j g_i(y_i^{*1}, \ldots, y_i^{*m}) = \nabla_j g_u(y_u^{*1}, \ldots, y_u^{*m}).
	\end{align} 
	Furthermore, Karush-Kuhn-Tucker (KKT) conditions are satisfied by the consensus of derivatives (cf. \eqref{optimality_cond_b1}) of the cost functions that are necessary and sufficient conditions of optimality of the optimization Problem \eqref{obj_fn1_b1}; a similar analysis is done in \cite{Syed2018_B, Wirth2014, Syed2018}, readers can find further details of KKT conditions at Chap. 5.5.3 \cite{Boyd2004}.
	 In this paper, we use this principle to show that the proposed algorithm reaches optimal values asymptotically. The consensus of derivatives of cost functions are also used in \cite{Wirth2014, Nabati2018, Chaturvedi2018} (single resource), \cite{Syed2018_B, Syed2018} (multi-resource---stochastic), \cite{Syed2018_al} (multi-resource---derandomized) to show the convergence of allocations to optimal values.
\section{Allocating single unit-demand resource through a feedback loop} \label{prelim} 
In this section, we consider the single resource case of \cite{Griggs2016} and briefly describe the proposed distributed, iterative and stochastic allocation algorithm. We also provide proof of its convergence and optimality properties with a few assumptions.

With a single resource, we can simplify notation by dropping the index $j$. Each agent $i$ has a strictly
convex cost function $g_i:(0,1] \to \mathbb{R}_+$. The binary
random variable $\xi_i(k)
\in \{ 0, 1\} $ denotes the allocation of the unit
resource for agent $i$ at time step $k$. Let $y_i(k)$ be the
average allocation up to time step $k$ of agent $i$ that is, $y_i(k)
\triangleq \frac{1}{k+1}\sum_{\ell=0}^{k} \xi_i(\ell)$. Let
$\xi(k) \in \{0,1\}^n$ and $y(k) \in [0,1]^n$ denote the vectors with entries $\xi_i(k),y_i(k)$, respectively, for $i=1,2,\ldots,n$. 

The idea is to choose the probability for random variable $\xi_i$ so as to ensure
convergence to the socially optimum value and to adjust overall
consumption to the desired level $C$ by applying a normalization
factor $\Omega$ to the probability, for all $i$. When an agent joins the
network at time step $k \in \mathbb{N}$, it receives the normalization factor
$\Omega(k)$. At each time step $k$, the central agency updates $\Omega(k)$ using a gain parameter $\tau$, past utilization of the resource, and its
capacity; then it broadcasts the new value to all agents in the network, 
\begin{align} \label{omega_bs1}
\begin{split}
\Omega(k+1) \triangleq \Omega(k) -  \tau  \Big (\sum_{i=1}^n \xi_i(k) -C \Big),
\end{split}
\end{align}
\begin{align} \label{tau_bs1}
\text{where } \tau \in \Big( 0,  \Big( \max_{y \in
	[0,1]^n} \sum_{i=1}^{n} \frac{ y_i}{g_i'({y}_i)}  \Big)^{-1} \Big).
\end{align}
After receiving this signal, agent $i$ responds in a random fashion
based on its available information. The probability function
$\sigma_i(\cdot)$ uses the average allocation of the resource to agent $i$ and the derivative $g_i'$ of the cost function $g_i$, is given by,
\begin{align} \label{sigma_bs}
\sigma_i(\Omega(k),y_i(k)) \triangleq \Omega(k)
\frac{{y}_i(k)}{
	{g_i'({y}_i(k))}}, \text{ for } i \in \mathcal{N}.	
\end{align} 
Agent $i$ updates its resource demand at each time step either by demanding one unit of the resource or not demanding it, as follows,
\begin{align*}
\xi_i(k+1) =
\begin{cases} 
1 \quad \text{with probability }  \sigma_i(\Omega(k),y_i(k));\\ 
0 \quad \text{with probability }  1-\sigma_i(\Omega(k),y_i(k)).
\end{cases}
\end{align*}
We point out that for the above formulation we require
assumptions on the cost function $g_i$ and the admissible value of $\Omega$ because the scheme requires that \eqref{sigma_bs} does, in fact, define a probability. For ease of notation, we define $v_i(z) \triangleq z/g_i'(z)$, where $z \in [0,1]$, and $v(y)$ to be the vector with components $v_i(y_i)$, for $i=1,2,\ldots,n$, where $y \in [0,1]^n$.

\begin{definition}[Admissibility]
	\label{ass:algorithm}
	Let $n \in \N$, and let $g_i:[0,1] \to \R_+$ be continuously
	differentiable and strictly convex, for $i=1,\ldots,n$. We call the set $\{
	g_i, i=1,\ldots,n \}$ and $\Omega>0$ admissible, if 
	\begin{enumerate}
		\item[(i)] $v_i$ is well defined on $[0,1]$, for $i=1,\ldots,n$,
		\item[(ii)] there are constants $0<a<b<1$, such that
		$\sigma_i(\Omega,z)=\Omega v_i(z) \in [a,b]$, for $i=1,\ldots,n$, and $z\in [0,1]$.
	\end{enumerate}
\end{definition}
The definition of admissibility imposes several restrictions on the
possible cost function $g_i$, similar to those imposed in \cite{Wirth2014}. 
See this reference for a detailed discussion and possible relaxations.
For the case that $\Omega$ is a constant that is, $\Omega$ does not depend on time step $k \in \mathbb{N}$; therefore, \eqref{omega_bs1} is not active, the convergence of the scheme follows using tools from classical stochastic approximation \cite{Borkar2008}.
\begin{theorem} \cite[Theorem~2.2]{Borkar2008} 
If $x(k) \in \mathbb{R}_+^n$, for $k \in \mathbb{N}$, be formulated as follows
\begin{align} \label{eq:avg_y2}
	&x(k+1) = x(k) + a(k) \big [ h(y(k)) + M(k+1) \big],
	\end{align} 
for a fixed $x(0)$ and Assumptions \ref{as:Borkar} (i) to (iv) are satisfied; then $\{x(k)\}$ converges to a connected chain-transitive set of the differential equation 
	$\dot x(t) = h(x(t)), \text{ almost surely}, \text{ for } t \geq 0.$
	\end{theorem}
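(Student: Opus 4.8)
The plan is to follow the classical ``ODE method'' of stochastic approximation: trace the discrete iteration \eqref{eq:avg_y2} by the flow of $\dot x = h(x)$ on finite time windows and then invoke the theory of asymptotic pseudotrajectories. First I would set up the interpolated process: define the time instants $t(0) \triangleq 0$ and $t(k) \triangleq \sum_{\ell=0}^{k-1} a(\ell)$, which increase to $+\infty$ by the divergence part of the step-size assumption, and let $\bar x(\cdot)$ be the continuous, piecewise-linear curve with $\bar x(t(k)) = x(k)$. The goal is to show that $\bar x(\cdot)$ is an asymptotic pseudotrajectory of the semiflow induced by $h$; once that is established, the Benaïm--Hirsch limit-set theorem yields that the trajectory, which is bounded almost surely by assumption (iv), converges to a connected, internally chain-transitive, invariant subset of $\dot x = h(x)$, which is exactly the assertion.

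The first analytic ingredient is control of the aggregated noise. Writing $\mathcal{F}_k$ for the natural filtration, the partial sums $\zeta(k) \triangleq \sum_{\ell=0}^{k-1} a(\ell) M(\ell+1)$ form a martingale whose predictable quadratic variation is finite almost surely, using the conditional second-moment bound $E\big[\norm{M(\ell+1)}^2 \mid \mathcal{F}_\ell\big] \le K(1 + \norm{x(\ell)}^2)$ together with $\sup_k \norm{x(k)} < \infty$ a.s.\ and $\sum_k a(k)^2 < \infty$. Hence $\zeta(k)$ converges almost surely, and in particular the tail oscillations $\sup_{j \ge k} \norm{\zeta(j) - \zeta(k)} \to 0$ as $k \to \infty$, which is what makes the accumulated noise asymptotically negligible over any window of bounded ODE-time length.

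The second ingredient is a Gronwall comparison on such windows. Fixing $T > 0$, on each interval $[t(k), t(k)+T]$ I would compare $\bar x(\cdot)$ with the solution $x^{(k)}(\cdot)$ of $\dot x = h(x)$ started at $\bar x(t(k))$, writing the difference as the sum of the Euler-type discretization error, the vanishing noise increments contributed by $\zeta$, and an error term of the form $\sum a(\ell)\big(h(x(\ell)) - h(x^{(k)}(\cdot))\big)$ which is handled by the Lipschitz bound on $h$ from assumption (i) together with boundedness of the iterates. A discrete Gronwall inequality then gives $\sup_{s \in [0,T]} \norm{\bar x(t(k)+s) - x^{(k)}(t(k)+s)} \to 0$ almost surely, i.e.\ the asymptotic-pseudotrajectory property.

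I expect the main obstacle to be the bookkeeping in this last step: making all three error contributions uniformly small requires simultaneously exploiting $\sum a(k)^2 < \infty$, the almost-sure convergence of $\zeta(k)$, and the fact that $a(k) \to 0$ (so that the piecewise-linear interpolation and the piecewise-constant evaluation of $h$ differ negligibly). After the pseudotrajectory property is in hand, the remaining conclusion---that the limit set is a connected chain-transitive invariant set of $\dot x = h(x)$---follows from Benaïm's characterization of limit sets of asymptotic pseudotrajectories and needs no further argument. Since the statement is quoted from \cite[Theorem~2.2]{Borkar2008}, an equally valid route is simply to check that hypotheses (i)--(iv) match its premises verbatim and cite it; the sketch above is precisely the content of that theorem's proof.
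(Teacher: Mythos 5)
The paper offers no proof of this statement at all---it is quoted directly from the cited reference \cite[Theorem~2.2]{Borkar2008}---and your sketch correctly reproduces the standard ODE-method argument given there: the piecewise-linear interpolation on the timescale $t(k)=\sum_{\ell<k}a(\ell)$, almost-sure convergence of the weighted noise martingale $\sum_\ell a(\ell)M(\ell+1)$ via the square-integrability bound, the Gronwall tracking estimate on windows of length $T$, and the Bena\"{\i}m-type limit-set theorem identifying the limit set as a connected internally chain-transitive invariant set. Your proposal is therefore correct and takes essentially the same route as the source the paper defers to; the only thing worth flagging is the typo in the displayed recursion, where $h(y(k))$ should read $h(x(k))$.
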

\begin{assumption} \label{as:Borkar}
\begin{itemize}
\item[(i)] The map $h$ is Lipschitz.
\item[(ii)] Step-size $a(k)>0$, for $k \in \mathbb{N}$, and 
\begin{align*}
\sum_{\ell=0}^{\infty} a(\ell) = \infty, \text{ and } \sum_{\ell=0}^{\infty} \big(a(\ell) \big)^2 < \infty.
\end{align*}
\item[(iii)] $\{M(k)\}$ is a martingale difference sequence with respect to the $\sigma$-algebra $\mathcal{F}_k$ generated by the events up to time step $k$. Also, for $l^2$-norm $\norm{\cdot}^2$, martingale difference sequence $\{ M(k)\}$ is square-integrable that is,
\begin{align*}
	\mathbb{E} \big(\norm{M(k+1)}^2 \vert {\cal F}_k \big ) \leq \eta(1+\norm{x(k)}^2), \\ \text{ almost surely, for } k \in \mathbb{N}, \text{ and } \eta >0 \nonumber.
\end{align*}
\item[(iv)] Sequence $\{x(k)\}$ is almost surely bounded. 
\end{itemize}
\end{assumption}	
Theorem on the convergence of average allocation $y(k)$ is stated as follows.
\begin{theorem}[Convergence of average allocations]
	\label{t:single-res}
	Let $n\in \mathbb{N}$. Assume that the cost function $g_i: [0,1] \to
	\R_+$ is strictly convex, continuously differentiable and strictly increasing in each variable, for $i=1,\ldots,n$. Let
	$\Omega>0$, and assume that $\{
	g_i, i=1,\ldots,n \}$ and $\Omega$ are admissible.
	Then almost surely, $\lim_{k\to\infty} y(k) = y^*$, where $y^*$ is
	characterized by the condition,
	\begin{equation}
	\label{eq:fixchar}
	\Omega = g'_i(y_i^*) , \quad \text{for } i=1,\ldots,n.
	\end{equation}
\end{theorem}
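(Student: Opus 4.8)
\medskip
\noindent\textbf{Proof proposal.}
The plan is to recognize the recursion governing the average allocations $y(k)$ as a stochastic approximation scheme of the form \eqref{eq:avg_y2}, to verify the hypotheses collected in Assumption~\ref{as:Borkar} so that \cite[Theorem~2.2]{Borkar2008} applies, and then to reduce the claim to the asymptotic analysis of the associated limiting ordinary differential equation, which decouples across agents.

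First I would rewrite the defining relation of $y_i(k)$ as the incremental update
\begin{align*}
y_i(k+1) = y_i(k) + \frac{1}{k+2}\big(\xi_i(k+1) - y_i(k)\big), \qquad i \in \mathcal{N},
\end{align*}
which has the form \eqref{eq:avg_y2} with step size $a(k) = 1/(k+2)$, drift
\begin{align*}
h_i(y) \triangleq \sigma_i(\Omega,y_i) - y_i = \Omega v_i(y_i) - y_i,
\end{align*}
and noise $M_i(k+1) \triangleq \xi_i(k+1) - \sigma_i(\Omega,y_i(k))$. Since $\{g_i, i=1,\ldots,n\}$ and $\Omega$ are admissible, $\sigma_i(\Omega,y_i(k)) \in [a,b] \subset (0,1)$, so each $\xi_i(k+1)$ is a genuine Bernoulli variable with $\mathbb{E}[\xi_i(k+1)\mid\mathcal{F}_k] = \sigma_i(\Omega,y_i(k))$; hence $\{M(k)\}$ is a square-integrable martingale difference sequence with $\norm{M(k+1)} \le \sqrt{n}$ almost surely, which gives Assumption~\ref{as:Borkar}(iii) with $\eta = n$. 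The step-size conditions of Assumption~\ref{as:Borkar}(ii) are immediate for $a(k) = 1/(k+2)$, and since $y(k) \in [0,1]^n$ for every $k$, Assumption~\ref{as:Borkar}(iv) holds trivially. For Assumption~\ref{as:Borkar}(i) I would combine admissibility with the regularity assumed on the cost functions to deduce that each $v_i$, and hence $h$, is Lipschitz on $[0,1]^n$.

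With these hypotheses in place, \cite[Theorem~2.2]{Borkar2008} yields that, almost surely, $\{y(k)\}$ converges to a connected chain-transitive set of the limiting ODE $\dot y = h(y)$, which splits into the $n$ scalar equations
\begin{align*}
\dot y_i = \Omega v_i(y_i) - y_i = \frac{y_i}{g_i'(y_i)}\big(\Omega - g_i'(y_i)\big), \qquad i \in \mathcal{N}.
\end{align*}
The admissibility bounds $\sigma_i(\Omega,0)\ge a>0$ and $\sigma_i(\Omega,1)\le b<1$ force the vector field to point strictly inward at the two endpoints, so $[0,1]$ is forward invariant and $\Omega$ lies strictly between $g_i'(0)$ and $g_i'(1)$. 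Strict convexity of $g_i$ makes $g_i'$ strictly increasing, so there is a unique equilibrium $y_i^* \in (0,1)$ with $g_i'(y_i^*) = \Omega$, and $\dot y_i > 0$ for $y_i < y_i^*$ while $\dot y_i < 0$ for $y_i > y_i^*$. Consequently $V(y) \triangleq \sum_{i=1}^n (y_i - y_i^*)^2$ is a strict Lyapunov function for $\dot y = h(y)$ on $[0,1]^n$ whose only critical point is $y^*$, so the unique nonempty connected chain-transitive set contained in $[0,1]^n$ is $\{y^*\}$. Therefore $y(k) \to y^*$ almost surely, with $y^*$ characterized by \eqref{eq:fixchar}.

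The step I expect to be the main obstacle is the verification of Assumption~\ref{as:Borkar}(i): near $y_i = 0$ the denominator $g_i'(y_i)$ in $v_i(y_i) = y_i/g_i'(y_i)$ may vanish, and showing that this ratio stays Lipschitz there is exactly where the admissibility conditions and the smoothness of $g_i$ must be used with care (cf.\ the discussion referenced in \cite{Wirth2014}). By contrast, matching the recursion to \eqref{eq:avg_y2}, the martingale and step-size bookkeeping, and the phase-line analysis of the decoupled scalar ODEs are all routine.
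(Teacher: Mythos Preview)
Your proposal is correct and follows essentially the same route as the paper: rewrite the averaging recursion as a Robbins--Monro scheme, verify the hypotheses of \cite[Theorem~2.2]{Borkar2008}, and conclude by analyzing the decoupled scalar ODE $\dot y_i = \Omega v_i(y_i) - y_i$, using admissibility for the endpoint signs and strict convexity of $g_i$ for uniqueness of the equilibrium. The only cosmetic differences are that the paper argues asymptotic stability directly from the sign of $\dot y_i$ rather than via your explicit Lyapunov function $V(y)=\sum_i(y_i-y_i^*)^2$, and that the paper asserts the Lipschitz property of $h$ without comment, whereas you rightly flag the behavior of $v_i$ near $0$ as the place where admissibility and smoothness must do real work.
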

\begin{proof}
	By definition, we have,
	\begin{equation} \label{eq:avg_y0}
	y(k+1) = \frac{k}{k+1} y(k) + \frac{1}{k+1} \xi(k+1).
	\end{equation}
	Let $\sigma(\Omega, y(k))$ denote the vectors with entries $\sigma_i(\Omega, y_i(k))$, for $i=1,2,\ldots,n$, and $k =0, 1, 2, \ldots$ Thus, \eqref{eq:avg_y0} may be reformulated as
	\begin{align} \label{eq:avg_y}
	&y(k+1) = y(k) + \nonumber \\ & \hspace{0.2in}\frac{1}{k+1} \big [ \big( \sigma(\Omega, y(k)) - y(k) \big) + \big( \xi(k+1) - \sigma(\Omega, y(k)) \big) \big].
	\end{align} 
	Furthermore, let $\big( \xi(k+1) - \sigma(\Omega, y(k)) \big)$ be denoted by $M(k+1)$, and the step-size $\frac{1}{k+1}$ be denoted by $a(k)$, for $k \in \mathbb{N}$. Also, let $\big( \sigma(\Omega, y(k)) - y(k) \big)$ be denoted by $h(y(k))$. Then we can reformulate \eqref{eq:avg_y} similar to \eqref{eq:avg_y2}. 
	
	We can verify that Assumption \ref{as:Borkar} (i) to (iv) are satisfied for formulation \eqref{eq:avg_y}. Recall that $h(y(k)) = \big( \sigma(\Omega, y(k)) - y(k) \big)$; thus, the map $h: y \mapsto \sigma(\Omega, y) - y = \Omega v(y) -y$ is Lipschitz, which satisfies Assumption \ref{as:Borkar} (i). Also, the step-size $a(k)=\frac{1}{k+1}$ is positive, for $k=0,1,2, \ldots$, and we can derive that
		 \begin{align*}
\sum_{\ell=0}^{\infty} a(\ell) = \infty, \text{ and } \sum_{\ell=0}^{\infty} \big(a(\ell) \big)^2 < \infty,
\end{align*}
 which satisfy Assumption \ref{as:Borkar} (ii). Additionally, we note that the expectation:
	\begin{equation} \label{eq:martingale}
	\mathbb{E} \big( \xi(k+1) - \sigma (\Omega, y(k)) \vert {\cal F}_k \big ) = 0,
	\end{equation}
	where ${\cal F}_k$ is the $\sigma$-algebra generated by the events up to time step $k$. This follows immediately from the definition of the probability
	$\sigma_i(\cdot)$. By \eqref{eq:martingale}, we say that $\{M(k)\}$ is a martingale difference sequence with respect to $\sigma$-algebra; also, the sequence $\{ \xi(k+1) - \sigma(\Omega, y(k)) \} $ is of course bounded, with little manipulation we can show that the martingale difference sequence $\{M(k)\}$ is square-integrable---which satisfy Assumption \ref{as:Borkar} (iii). Moreover, the iterate $y(k) \in [0,1]^n$ is bounded almost surely, which satisfies Assumption \ref{as:Borkar} (iv). Thus, it follows that almost surely $\{y(k)\}$ converges to a connected chain-transitive set of the differential equation,
	\begin{equation}  \label{eq:dif}
	\dot y = \Omega v(y) - y.
	\end{equation}
	
	It remains to show that the differential equation has an asymptotically stable fixed point whose domain of attraction contains the set $[0,1]^n$, as this then determines the unique possible limit point of $\{
	y(k) \}$. We note first that the differential equation is given by $n$ decoupled equations
	\begin{equation*}
	\dot y_i = \Omega v_i(y_i) - y_i, \quad \text{for } i =1,2,\ldots,n.   
	\end{equation*}
	The fixed points for each of these $1$-dimensional equations are
	characterized by the condition $\Omega y_i^*/g'_i(y_i^*) - y_i^* =0$. We
	have by Definition~\ref{ass:algorithm}\,(ii) that 
	\begin{equation}
	\label{eq:uniquefix}
	\Omega v_i(0)>0  \quad \text{and}\quad \Omega v_i(1) -1 <0, \quad \text{for } i =1,2,\ldots,n.
	\end{equation}
	This shows that $y_i^* \in (0,1)$ and so a little manipulation shows that fixed points are characterized by,
	\begin{equation*}
	\Omega = g_i'(y_i^*), \quad \text{for } i = 1, 2, \ldots, n.
	\end{equation*}
	As $g_i$ is strictly convex, $g_i'$ is strictly increasing and so the fixed point for each of the decoupled equations is unique. Now,
	\eqref{eq:uniquefix} together with sign considerations shows asymptotic stability  and the desired property of the domain of attraction. The proof is complete.
\end{proof}
 Notice that proof of convergence is based on the constant $\Omega$; it is an open problem to prove convergence of average allocation with $\Omega(k)$ that varies with time step $k$ (cf. \eqref{omega_bs1}).
\begin{remark}[Optimality] \label{re_Optim_s}
	We note that the fixed point condition \eqref{eq:fixchar} can be
	interpreted as an optimality condition---as established in \eqref{optimality_cond_b1}. If we define $C^* \triangleq \sum_{i=1}^n
	y_i^*$ then \eqref{eq:fixchar} shows that $y^*$ is the unique optimal point of the optimization problem:
	\begin{align*}
	\min_{y_1, \ldots, y_n} & \sum_{i=1}^n g_i(y_i), \\ \text { subject to }
	& \sum_{i=1}^n y_i = C^*,\quad y_i \geq 0, \quad \text{for } i =1,2,\ldots,n.   
	\end{align*}
	Furthermore, the equation shows that $\Omega$ may be used to adjust the fixed points; thus the constraints. As the cost function $g_i$ is strictly convex and increasing in each variable, the derivative $g_i'$ is positive and increasing. Therefore, increasing $\Omega$ increases each $y_i^*(\Omega)$; thus the total constraint $C^*(\Omega)$, while decreasing $\Omega$ has the opposite effect. The simple PI controller for $\Omega$ in \eqref{omega_bs1} thus has the purpose of adjusting to the right level of resource consumption. The full proof of convergence of the scheme with PI-controller in the loop is beyond the scope of the present paper.
\end{remark}

	\section{Allocating multiple unit-demand resources through coupled feedback loops} \label{bin_imp}

We turn our attention in this section to the case of multiple resources shared by the same population of agents. We present a new algorithm that generalizes the single-resource algorithm of the previous section to multiple unit-demand resources. The agents are coupled through these shared resources.

	Before presenting the algorithm, we introduce the following additional notions. Suppose that there exists $\delta > 0$, such that $\mathcal G_{\delta}$ is a set of continuously differentiable, convex and increasing functions, and $g_1, g_2, \ldots, g_n \in \mathcal G_{\delta}$. We assume that $\mathcal G_{\delta}$ is common knowledge to the control unit, and each cost function $g_i$ is private and should be kept private. Although $\mathcal G_{\delta}$ is common knowledge, due to the large number of the cost functions $g_1,g_2, \ldots, g_n$ in $\mathcal G_{\delta}$, it is difficult for the control unit to guess the cost function $g_i$ of a particular agent $i$; it is true for every agent in the network. 

	Each agent in the network runs the distributed unit-demand multi-resource allocation algorithm. Let $\tau^j$ be the gain parameter, $\Omega^j(k)$ denotes the normalization factor (signal of the controller) of the feedback loop, and let $C^j$ represent the desired value (capacity) of resource $R^j$, respectively, for all $j$. We use the term {\em control unit} instead of controller here. The control unit updates $\Omega^j(k)$ according to \eqref{omega_b1} at each time step and broadcasts it to all agents in the network, for all $j$ and $k$. When an agent joins the network at time step $k$, it receives the parameter $\Omega^j(k)$ for resource $R^j$, for all $j$. Every agent's algorithm updates its resource demand at each time step---either by demanding one unit of the resource or not demanding it.
	The normalization factor $\Omega^j(k)$ depends on its value at the previous time step, $\tau^j$, capacity $C^j$, and the total utilization of resource $R^j$ at the previous time step, for all $j$ and $k$. After receiving this signal,  agent $i$'s algorithm responds in a probabilistic manner. It calculates its probability $\sigma^j_i(k)$ using its average allocation $y_i^j(k)$ of resource $R^j$ and the derivative of its cost function, for all $j$ and $k$, as described in \eqref{prob_x2}. Agent $i$ finds out the outcome of Bernoulli trial for resource $R^j$, outcome $1$ with probability $\sigma^j_i(k)$ and outcome $0$ with probability $1-\sigma^j_i(k)$; based on the value $0$ or $1$, the algorithm decides whether to demand one unit of the resource $R^j$ or not. If the value is $1$, then the algorithm demands one unit of the resource; otherwise, it does not demand the resource, analogously, it is done for all the resources. This process repeats over time. We present the proposed {\em unit-demand multi-resource allocation} algorithm for the control unit in Algorithm \ref{algoCU2} and the algorithm for each agent in Algorithm \ref{algo3}.
	
	\begin{algorithm}  \SetAlgoLined Input:
		$C^{1}, \ldots, C^{m}$, $\tau^1, \ldots,\tau^m$, $ \xi_i^1(k), \ldots, \xi_i^m(k)$, for $k \in \mathbb{N}$ and $i \in \mathcal{N}$.
		
		Output:
		$\Omega^{1}(k+1), \Omega^{2}(k+1), \ldots, \Omega^{m}(k+1)$, for $k \in \mathbb{N}$.
		
		Initialization: $\Omega^{j}(0) \leftarrow 0.350^\text{1}$ , for $j \in \mathcal{M}$,
		
		\ForEach{$k \in \mathbb{N}$}{
			
			\ForEach{$j \in \mathcal{M}$}{
				
				calculate $\Omega^j(k+1)$ according to \eqref{omega_b1} and broadcast in the network;	
		} }
		\caption{Algorithm of control unit}
		\label{algoCU2}
	\end{algorithm}
	\footnotetext[1]{We initialize it with a positive real number for each resource.}
	\begin{algorithm}  \SetAlgoLined Input:
		$\Omega^{1}(k), \Omega^{2}(k), \ldots, \Omega^{m}(k)$, for $k \in \mathbb{N}$.
		
		Output: $\xi^1_i(k+1), \xi^2_i(k+1), \ldots, \xi^m_i(k+1)$, for $k \in \mathbb{N}$.
		
		Initialization: $\xi^j_i(0) \leftarrow 1$ and
		${y}^j_i(0) \leftarrow \xi^j_i(0)$, for
		$ j \in \mathcal{M}$.
		
		\ForEach{$k \in \mathbb{N} $}{
			
		\ForEach{$j \in \mathcal{M}$}{
				$\sigma^j_i(k) \leftarrow \Omega^j(k)
				\frac{{y}^j_i(k)}{ \nabla_j
					{g_i({y}_i^1(k)}, \ldots,
					{y}^m_i(k))}$; 
				
				generate Bernoulli independent random variable
				$b^j_i(k)$ with the parameter $\sigma^j_i(k)$;
				
				\eIf{ $b^j_i(k) = 1$}{
					$\xi^j_i(k+1) \leftarrow 1$; }
				{$\xi^j_i(k+1) \leftarrow 0$; }

				${y}^j_i(k+1) \leftarrow \frac{k+1}{k+2}
				{y}^j_i(k) + \frac{1}{k+2} \xi^j_i(k+1);$} 
		}
		\caption{Unit-demand multi-resource allocation algorithm of agent $i$}
		\label{algo3}
	\end{algorithm}
	After introducing the algorithms, we describe here how to calculate different factors. Let ${x}_1^1, \ldots, {x}_n^m \in [0,1]$ be the deterministic values of average allocations then the control unit calculates the \emph{gain parameter} $\tau^j$ with common knowledge of $\mathcal{G}_\delta$, for all $j$, as follows, 
	\begin{align} \label{tau}
		\tau^j \in \big ( 0,  \big ( \sup_{{x}_1^1, \ldots, {x}_n^m \in \mathbb{R}_+, g_1, \ldots, g_n \in \mathcal{G}_\delta} \sum_{i=1}^{n} \frac{ x_i^j}{\nabla_j{g_i({x}_i^1, \ldots, {x}_i^m)}} \big )^{-1} \big ).
	\end{align}
	Now, we define $\Omega^j(k + 1)$ which is based on the utilization of resource $R^j$ at time step $k$ and common knowledge $\mathcal{G}_\delta$ as follows,
	\begin{align} \label{omega_b1}
		\begin{split}
			\Omega^j(k+1) \triangleq \Omega^j(k) -  \tau^j  \Big (\sum_{i=1}^n \xi^j_i(k) -C^j \Big ).
		\end{split}
	\end{align}
	We call $\Omega^j(k)$ as the {\em normalization factor}, used by the control unit. After receiving the normalization factor $\Omega^j(k)$ from the control unit at time step $k$, agent $i$ responds with probability $\sigma_i^j(k)$ in the following manner to demand resource $R^j$ at next time step, for all $i$, $j$ and $k$:
	\begin{align} \label{prob_x2}
		\sigma_i^j(k) \triangleq  \Omega^j(k) \frac{ y^j_i(k)}{ \nabla_j{g_i({y}_i^1(k), {y}_i^2(k), \ldots, {y}_i^m(k))}}.
	\end{align}	
	Notice that $\Omega^j(k)$ is used to bound the
	probability $\sigma_i^j(k) \in (0,1)$, for all $i$, $j$ and $k$. Furthermore, note that for simplicity of notation we use $\sigma_i^j(k)$ instead of $\sigma_i^j(\Omega^j(k), y_i^j(k))$ in this section.
	%
	%
	
	Let $\xi^j(k) \in \{0,1\}^n$ and $y^j(k) \in [0,1]^n$ denote the vectors with entries $\xi_i^j(k),y_i^j(k)$, respectively, and $\sigma^j(k)$ denotes the vector with entries $\sigma_i^j(k)$, for $i=1,2,\ldots,n$, $j=1,2,\ldots,m$, and $k =0, 1, 2, \ldots$ Then similar to the single resource case, we can restate the definition of admissibility as in Definition \ref{ass:algorithm} and the theorem of convergence of average allocation $y^j(k)$ for a constant $\Omega^j$, for $j=1,2,\ldots,m$, as in Theorem \ref{t:single-res}. We state the generalized theorem of convergence of average allocations of multi-resource as follows.
\begin{theorem}[Convergence of average allocations]
	\label{t:mul-res}
	Let $n\in \mathbb{N}$. Assume that the cost function $g_i: [0,1]^n \to
	\R_+$ is strictly convex, continuously differentiable and strictly increasing in each variable, for $i=1,\ldots,n$. Furthermore, let
	$\Omega^j>0$, and assume that $\{g_i, i=1,\ldots,n \}$ and $\Omega^j$ are admissible, for $j=1,\ldots,m$. Then almost surely, $\lim_{k\to\infty} y^j(k) = y^{*j}$, where $y^{*j}$ is
	characterized by the condition,
	\begin{align} \label{eq:fixchar1}
	\Omega^j =  \nabla_j{g_i({y}_i^{*1}, {y}_i^{*2}, \ldots, {y}_i^{*m})},  \\ \text{for } i=1,\ldots,n, \text{ and } j=1,\ldots,m. \nonumber
	\end{align}
\end{theorem}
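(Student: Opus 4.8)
The plan is to mirror the proof of Theorem~\ref{t:single-res}, now working with the full state vector $y(k) = (y_i^j(k))_{i\in\mathcal N,\,j\in\mathcal M} \in [0,1]^{nm}$. From the update rule in Algorithm~\ref{algo3}, for every $i$ and $j$,
\[
y_i^j(k+1) = \tfrac{k+1}{k+2}\, y_i^j(k) + \tfrac{1}{k+2}\,\xi_i^j(k+1),
\]
which I would rewrite, exactly as in \eqref{eq:avg_y}, as a stochastic-approximation recursion $y(k+1) = y(k) + a(k)\big[h(y(k)) + M(k+1)\big]$ with step-size $a(k) = \tfrac{1}{k+2}$, drift $h(y)_i^j = \sigma_i^j(\Omega^j,y_i) - y_i^j = \Omega^j v_i^j(y_i) - y_i^j$ where $v_i^j(y_i) \triangleq y_i^j/\nabla_j g_i(y_i)$, and martingale term $M(k+1)_i^j = \xi_i^j(k+1) - \sigma_i^j(k)$. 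I would then verify Borkar's Assumption~\ref{as:Borkar}(i)--(iv) verbatim as in the single-resource proof: the multi-resource restatement of admissibility makes each $v_i^j$, hence $h$, Lipschitz on $[0,1]^{nm}$; $a(k)=1/(k+2)$ satisfies the Robbins--Monro conditions; $\mathbb E\big(\xi_i^j(k+1) - \sigma_i^j(k) \mid \mathcal F_k\big) = 0$ by construction of the Bernoulli trials, so $\{M(k)\}$ is a bounded, square-integrable martingale difference sequence; and $y(k)\in[0,1]^{nm}$ is bounded a.s. Borkar's theorem then gives that $\{y(k)\}$ converges a.s.\ to a connected chain-transitive set of the ODE $\dot y_i^j = \Omega^j v_i^j(y_i) - y_i^j$, $i\in\mathcal N$, $j\in\mathcal M$.

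The genuinely new ingredient — and the main obstacle — is the fixed-point analysis of this ODE. Unlike the single-resource case it does \emph{not} fully decouple: it splits into $n$ independent blocks (one per agent), but within agent $i$ the $m$ coordinates are coupled through $\nabla g_i$. For each fixed $i$ I would argue as follows. Admissibility ($\sigma_i^j\in[a,b]$ with $0<a<b<1$) makes the cube $[0,1]^m$ forward invariant for the $i$-th block, since $\dot y_i^j \ge a > 0$ on the face $\{y_i^j=0\}$ and $\dot y_i^j \le b-1 < 0$ on $\{y_i^j=1\}$. Using the factorization $\dot y_i^j = \tfrac{y_i^j}{\nabla_j g_i(y_i)}\big(\Omega^j - \nabla_j g_i(y_i)\big)$, I would take the Lyapunov function
\[
V_i(y_i) \triangleq g_i(y_i) - \sum_{j=1}^m \Omega^j y_i^j,
\]
which is strictly convex (since $g_i$ is) and hence has a unique minimizer on $[0,1]^m$; the boundary sign conditions above force this minimizer into the interior, so it is the unique stationary point of $V_i$, characterized exactly by $\Omega^j = \nabla_j g_i(y_i^{*})$ for all $j$ — i.e.\ \eqref{eq:fixchar1}. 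A direct computation gives
\[
\dot V_i = \sum_{j=1}^m \big(\nabla_j g_i(y_i) - \Omega^j\big)\dot y_i^j = -\sum_{j=1}^m \frac{y_i^j}{\nabla_j g_i(y_i)}\big(\Omega^j - \nabla_j g_i(y_i)\big)^2 \le 0,
\]
with equality only at $y_i^{*}$ (on any trajectory, which immediately enters the interior). Since sublevel sets of $V_i$ in $[0,1]^m$ are compact, LaSalle's invariance principle yields that $y_i^{*}$ is asymptotically stable with domain of attraction all of $[0,1]^m$.

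Assembling the $n$ blocks, $y^{*} = (y_i^{*j})$ is then the unique asymptotically stable equilibrium of the full ODE with domain of attraction $[0,1]^{nm}$, hence the only connected chain-transitive set meeting $[0,1]^{nm}$. Therefore $\{y(k)\}$ converges to it a.s., i.e.\ $\lim_{k\to\infty} y^j(k) = y^{*j}$ a.s.\ for every $j$, and \eqref{eq:fixchar1} holds. The points I expect to require care are (a) the well-definedness and Lipschitz continuity of $v_i^j$ up to the faces $\{y_i^j = 0\}$, which makes the admissibility hypothesis implicitly constrain the behaviour of $\nabla_j g_i$ there (just as $g_i'(0)=0$ is implicitly needed in the single-resource proof), and (b) the justification of the LaSalle step — that trajectories never leave the region where $V_i$ and the vector field are well-behaved — both of which are delivered by the uniform $[a,b]$ bounds from admissibility.
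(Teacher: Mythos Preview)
Your proposal is correct and in fact supplies substantially more than the paper's own proof. The paper's argument for Theorem~\ref{t:mul-res} consists only of writing the recursion \eqref{eq:avg_y_m}, observing that it has the same form as \eqref{eq:avg_y}, and stating that ``the proof follows the single resource case''; no separate analysis of the limiting ODE is given. You correctly identify the point the paper glosses over: in the single-resource setting the ODE decouples into $n$ scalar equations, so uniqueness and asymptotic stability of the fixed point follow from elementary monotonicity of $g_i'$, whereas in the multi-resource setting the $m$ coordinates of each agent are coupled through $\nabla g_i$ and this one-line argument no longer applies. Your Lyapunov function $V_i(y_i)=g_i(y_i)-\sum_j \Omega^j y_i^j$, combined with the admissibility-driven sign conditions on the faces of $[0,1]^m$ and LaSalle's principle, is a genuine addition that closes exactly this gap. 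What your route buys is an honest treatment of the coupled ODE step; what the paper's route buys is brevity, at the cost of leaving that step entirely to the reader. The caveats you flag in (a) and (b) are the right ones and are indeed handled by the uniform $[a,b]$ bounds coming from admissibility, just as in the single-resource case.
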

\begin{proof} We write the average allocation $y^j(k)$ as:
	\begin{align*}
	y^j(k+1) = \frac{k}{k+1} y^j(k) + \frac{1}{k+1} \xi^j(k+1),
	\end{align*}
	for $j=1,2, \ldots, m$. 
	This may be reformulated, for $j=1,\ldots,m$ as:
	\begin{align} \label{eq:avg_y_m}
	&y^j(k+1) = y^j(k) + \nonumber \\ & \hspace{0.2in}\frac{1}{k+1} \big [ \big( \sigma^j(k) - y^j(k) \big) + \big( \xi^j(k+1) - \sigma^j(k) \big) \big].
	\end{align} 
Notice that \eqref{eq:avg_y_m} is similar to \eqref{eq:avg_y}; thus, the proof follows the single resource case.
\end{proof}
Readers may note that proof of convergence with $\Omega^j(k)$ that varies with time step $k \in \mathbb{N}$ (cf. \eqref{omega_b1}), for $j=1,\ldots,m$, is an open problem.

Analogous to Remark \ref{re_Optim_s} with similar assumption on $C^{*j}$, for $j=1,\ldots,m$, we can write that the fixed point condition \eqref{eq:fixchar1} can be interpreted as an optimality condition---as established in \eqref{optimality_cond_b1}. Also, we say that $y^{*j}$, for $j=1,\ldots,m$, is the unique optimal point of the optimization Problem \ref{obj_fn1_b1}.
	\begin{remark}[Privacy of an agent]
	The control unit only knows about the aggregate utilization $\sum_{i=1}^{n} \xi_i^j(k)$ of resource $R^j$ at time step $k$ that ensures the privacy of probability and cost function of an agent. 
	\end{remark}
	Furthermore, notice that the network has very little communication overhead. Suppose that $\Omega^j(k)$ takes the floating point values represented by $\mu$ bits. If there are $m$ unit-demand resources in the network, then the communication overhead in the network will be $\mu m$ bits per time unit. Moreover, the communication complexity is independent of the number of agents participating in the
		network.
	
	We briefly present \cite{Syed2018_B} here. Let us assume that there are two unit-demand resources $R^1$ and $R^2$ in a network of $n$ agents. Agent $i$ desires to receive on-average $T_i \in [0,2]$ amount of the unit-demand resources in long-run, for $i = 1, 2, \ldots, n$. Although, the paper follows the same update scheme for normalization factors $\Omega^1(k)$ and $\Omega^2(k)$ (cf. \eqref{omega_b1}). However, the goal of the scheme is different from this paper. They aim to achieve: 
\begin{eqnarray*}
\label{eq:limit}
\lim_{k \rightarrow \infty} y^1_i(k) + y^2_i(k) & = & T_i, \quad \text{for } i = 1,\ldots, n. 
\end{eqnarray*} 
 And 
	\begin{align*}
	\lim_{k \rightarrow \infty} y^1_i(k) &= \beta_i  T_i, \quad \text{and}, 
	\lim_{k \rightarrow \infty} y^2_i(k) &= (1-\beta_i)  T_i,
	\end{align*}
	where $\beta_i \in [0,1]$, for $i = 1,2,\ldots, n$.
	\section{Application to electric vehicle charging} \label{bin_applications}
		\begin{figure*} 
		\centering
		\subfloat[]{%
			\includegraphics[width=0.29\linewidth]{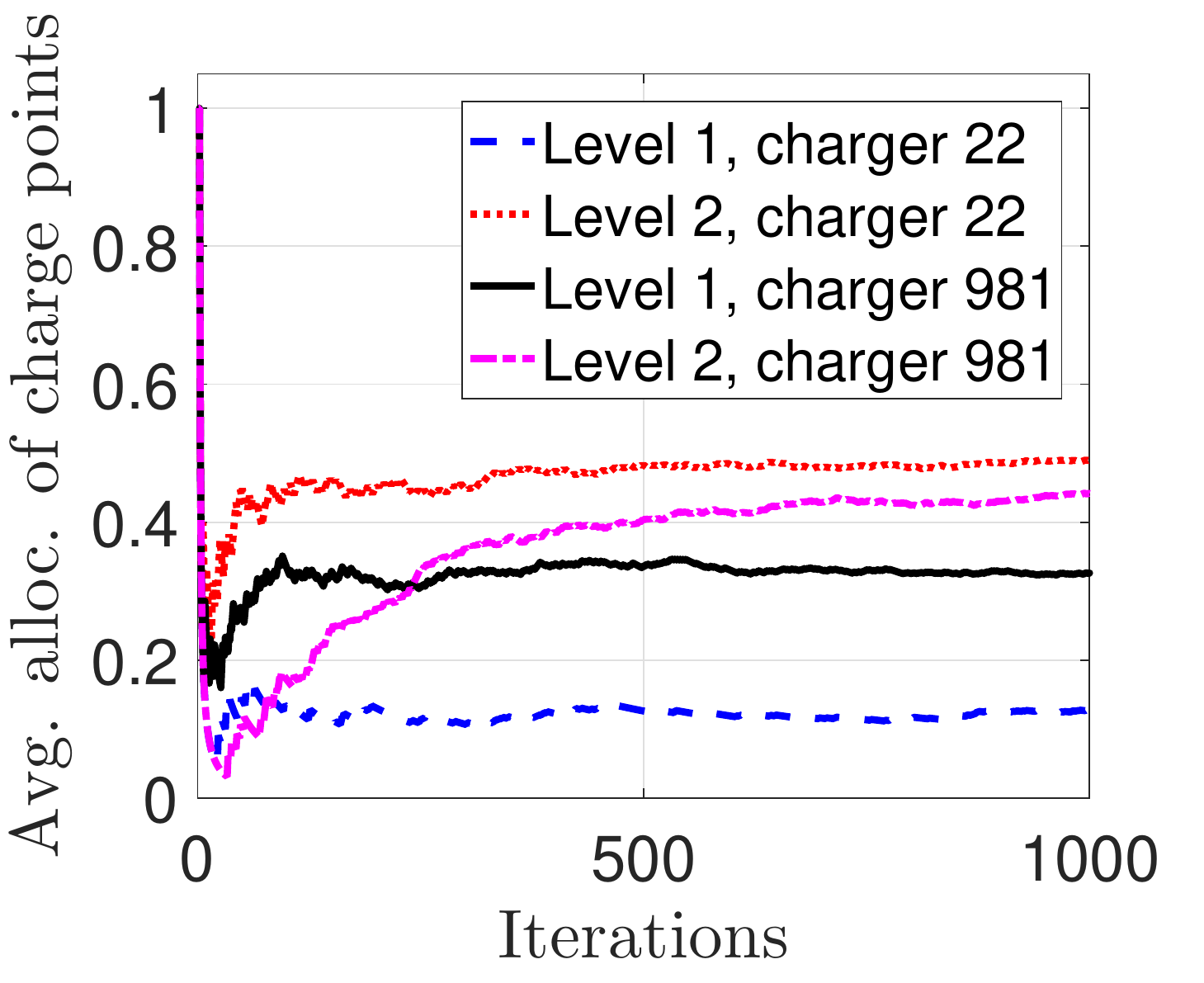}}
		\label{avg_BAIMD_2var}\hfill
		\subfloat[]{%
			\includegraphics[width=0.32\linewidth]{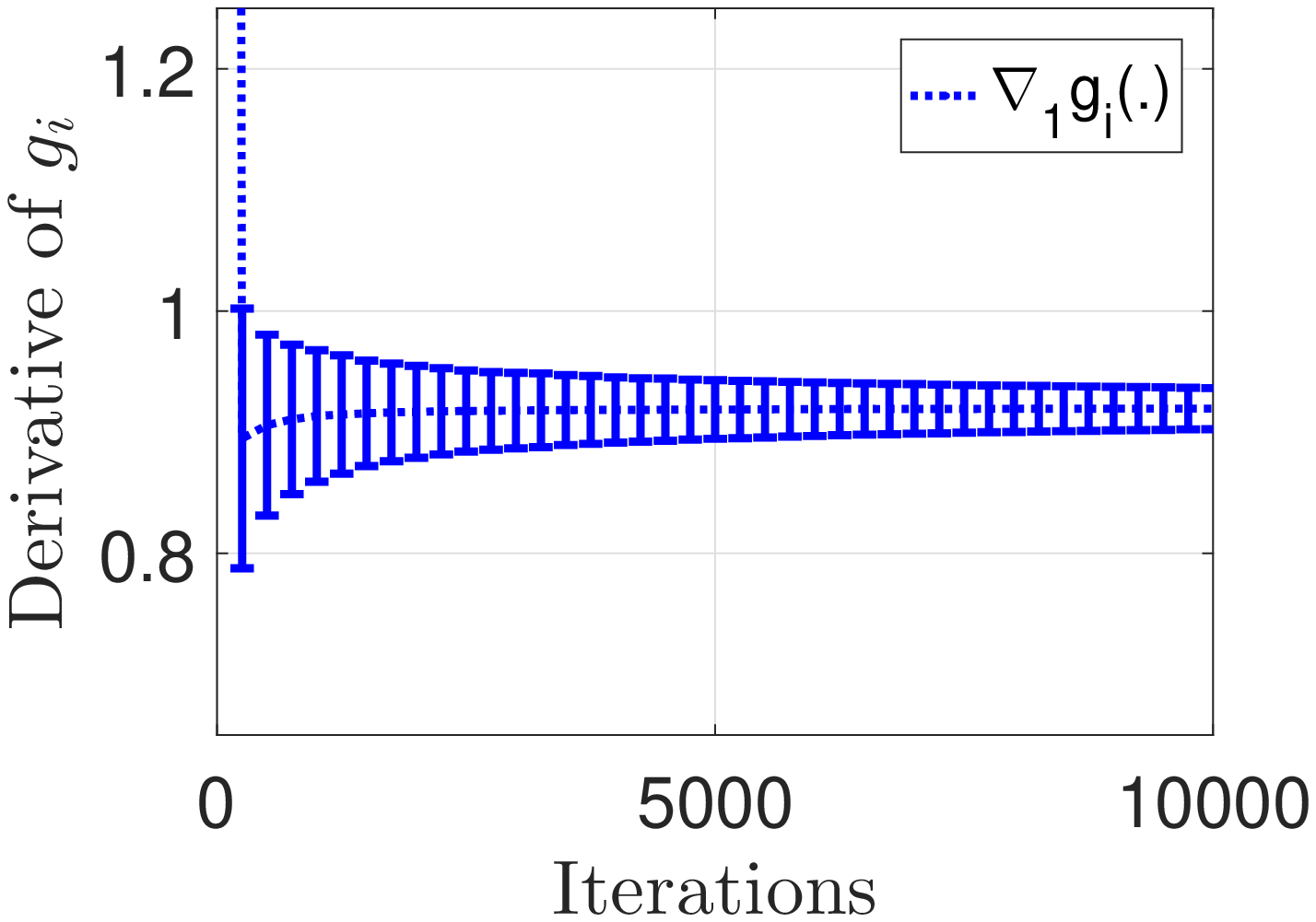}}
		\label{err_grad_BAIMD_2var_x1}\hfill
		\subfloat[]{%
			\includegraphics[width=0.32\linewidth]{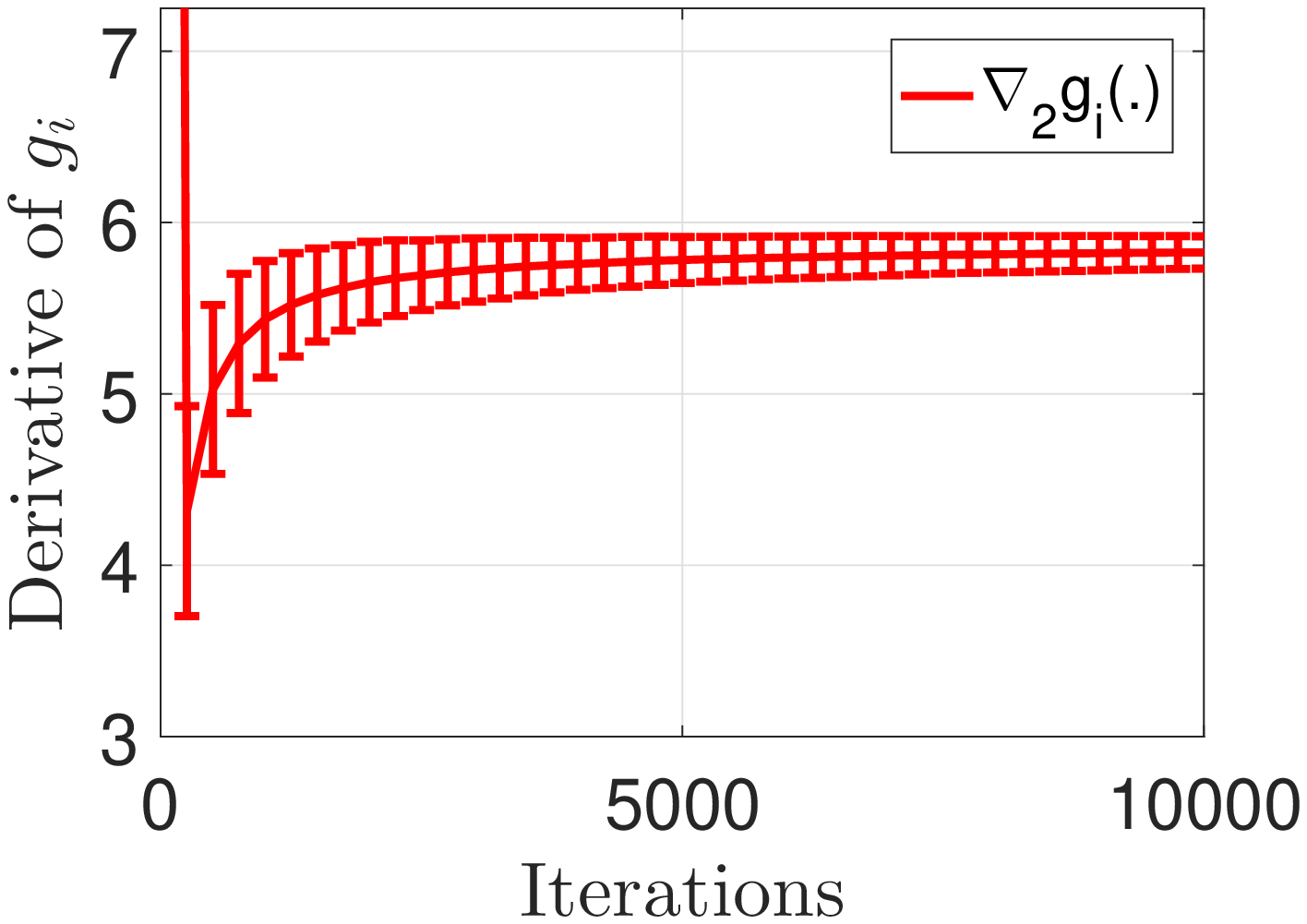}}
		\label{err_grad_BAIMD_2var_x2}\hfill
		
		\caption{(a) Evolution of average allocation of charging points, (b) evolution of profile of derivatives of $g_i$ of all the  electric cars with respect to level $1$ chargers, (c) evolution of profile of derivatives of $g_i$ of all the electric cars with respect to level $2$ chargers.}
		\label{fig3_bin} 
	\end{figure*}
	
	\begin{figure*}
		\centering
		\subfloat[]{%
			\includegraphics[width=0.32\linewidth]{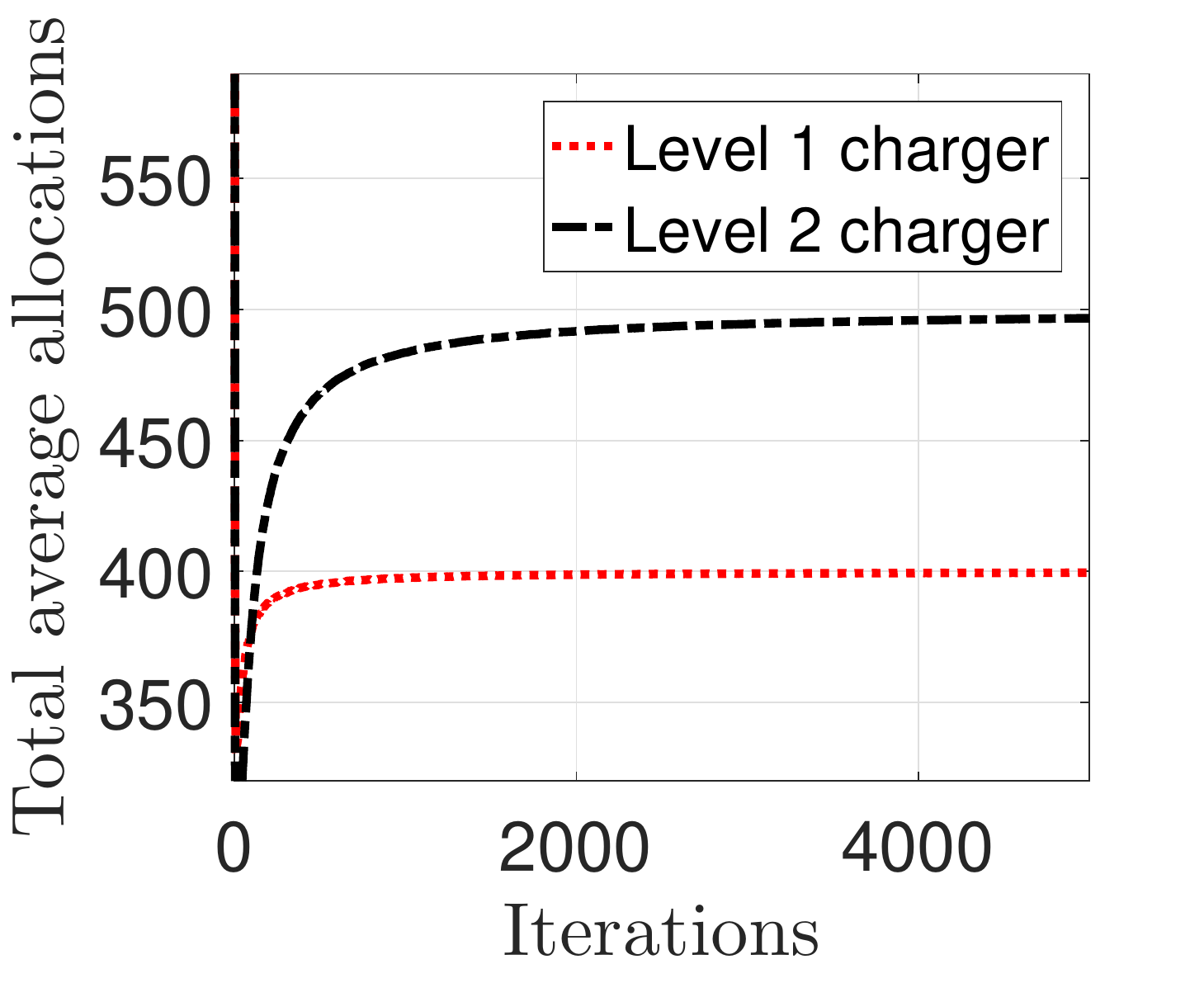}}
		\subfloat[]{%
			\includegraphics[width=0.35\linewidth]{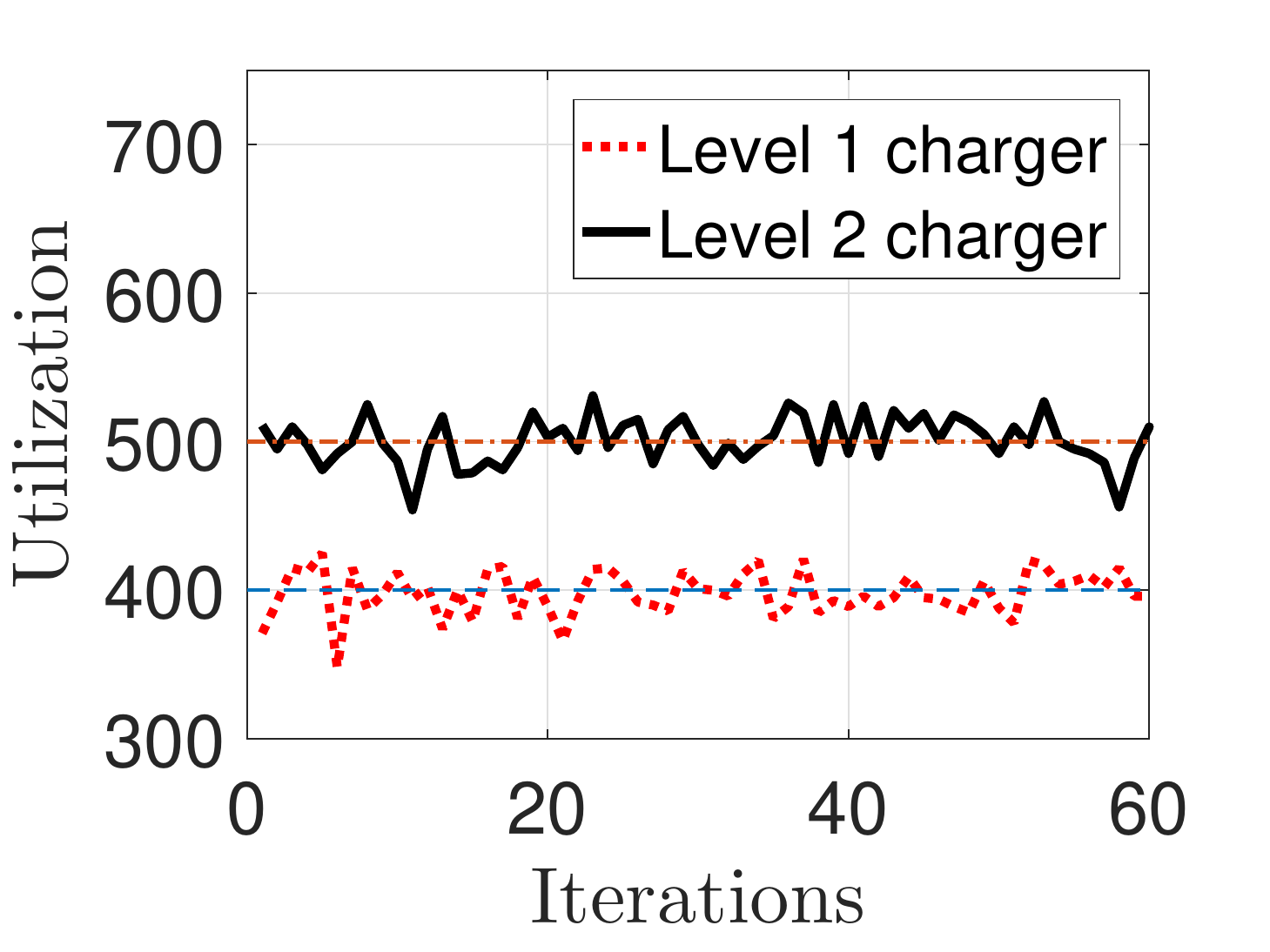}}
		\label{sum_alloc_2var}\hfill
		\captionof{figure}{
			(a) Evolution of the sum of average allocation of charging points, (b) utilization of charging points over the last $60$ time steps, capacities of level $1$ and level $2$ chargers are $C^1=400$ and $C^2=500$, respectively.}
		\label{sum_alloc_BAIMD}
	\end{figure*}
	In this section, we use Algorithms \ref{algoCU2} and \ref{algo3} to  regulate the number of electric vehicles that share a limited number of level $1$ and level $2$ charging points. We illustrate through numerical results that utilization of charging points (level $1$ or level $2$) is concentrated around its desired value (capacity); moreover, agents receive the optimal charging points in long-term averages, we verify this using the consensus of derivatives of cost functions of agents which satisfies all the KKT conditions for the optimization Problem \ref{obj_fn1_b1}, as described in Section \ref{prob_form}.
	
	As a background, the transportation sector in the US contributed around $27$\% of greenhouse gas (GHG) emission in $2015$ in which light-duty vehicles like cars have $60$\% contribution. Furthermore, the share of carbon dioxide is $96.7$\% of all GHG gases from the transportation sector \cite{USEPA2015}. To put it in context, currently, we have more than $1$ billion vehicles (electric (EV) as well as internal combustion engine (ICE)) on the road worldwide \cite{Sousanis_2018}, the number is increasing very rapidly which will result in increased  $\mathrm{CO_2}$ emission in future. Therefore, strategies are needed to reduce the $\mathrm{CO_2}$ emission. Though electric-only vehicles produce zero emission, the electricity generating units produce GHG emission at source depending on the power generation technique used, for example, thermal-electric, hydro-electric, wind power, nuclear power, etc. The US Department of Energy \cite{USDE2018} states that annual $\mathrm{CO_2}$ emission by an electric vehicle (EV) is $2,079.7$ kg (share of $\mathrm{CO_2}$ emission in producing electricity for charging an EV) and an ICE is $5,186.8$ kg.

	Now, consider a situation where a city sets aside several free (no monetary cost) {\em electric vehicle supply equipment} (EVSE) which supports level $1$ and level $2$ chargers at a public EV charging station to serve the residents or to promote usage of electric vehicles or both. Level $1$ charger works at $110\text{--}120$ Volt (V) AC, $15\text{--}20$ Ampere (A) and it takes around $8\text{--}12$ hours to charge the battery of an EV fully, whereas level $2$ charger works at $240$ V AC, $20\text{--}40$ A and it takes around $4\text{--}6$ hours to charge the battery fully---depending on the battery capacity, onboard charger capacity, and a few other factors \cite{Stephen_2014}. The voltage and current rating of chargers vary, details of ranges can be found in \cite{Yilmaz2013, Wang2016}. Furthermore, suppose that the city has installed $C^1$ EVSEs which support level $1$ chargers and $C^2$ EVSEs which support level $2$ chargers. Let $n$ electric cars are coupled through level $1$ and level $2$ charging points. Now, the city must decide whether to allocate level $1$ charging point or level $2$ charging point to an electric car to regulate the number of cars utilizing charging points. Clearly, in such a situation, charging points should be allocated in a distributed manner that preserves the privacy of individual car users, but also maximizes the benefit to the city. We use the proposed distributed stochastic algorithm which ensures the privacy of electric car users and allocates charging points optimally to maximize social welfare, for example, to minimize total electricity cost or $\mathrm{CO_2}$ emission. 
	
	According to \cite{EuroAss_2018}, on average $0.443$ kg of $\mathrm{CO_2}$ is produced to generate and distribute $1$ kWh of electric energy in the European Union with mix energy sources. Let $I$ be the current flowing in the circuit and $V$ be the voltage rating of the circuit, let $E_{CO_{2}}$ be the rate of $CO_2$ emission per kWh. If an EV is charged for $t$ hours at a charging point then its total share of $\mathrm{CO_2}$ emission, say $T_{\mathrm{CO_2}}(t)$ for generation and distribution of $I \times V \times t$ kWh electric energy is calculated as $T_{\mathrm{CO_2}}(t) = I \times V \times t \times E_{\mathrm{CO_2}}$, we use $E_{\mathrm{CO_2}} = 0.443$ kg.   
	 Table \ref{Co2_tab} illustrates the total $\mathrm{CO_2}$ emission in kg by level $1$ and level $2$ chargers in four-hours duration. We use this data to formulate the cost function  $g_i$ of (electric) car $i$, for all the cars.
%
\begin{table}[h]
\centering
\begin{tabular}{||c c c||} 
 \hline
 Charger type & power (kW) & $\mathrm{CO_2}$ emission in four hours \\
 \hline\hline
 Level $1$ & $1.65\text{--}2.40$ & $2.92\text{--}4.25$ kg \\ 
 Level $2$ & $4.80\text{--}9.60$ & $8.51\text{--}17.01$ kg \\
 \hline
\end{tabular}
\caption{$\mathrm{CO_2}$ emission in generation and distribution of electricity}
\label{Co2_tab}
\end{table}		

	Suppose that each car user has private cost function $g_i$ which depends on the average allocations $y_i^1(k)$ and $y_i^2(k)$ of level $1$ and level $2$ charging points, respectively, for $i=1,2,\ldots,n$. We assume that the city agency (control unit) broadcasts the normalization factors $\Omega^1(k)$ and $\Omega^2(k)$ to each competing electric car after every $4$ hours, here we chose a duration of $4$ hours because of charging rate of level $2$ chargers. Note that an EV user can unplug the vehicle in the middle of charging without fully charging the battery. Now, suppose that the cost functions are classified into four classes based on---the type of vehicle, its battery capacity, onboard charger capacity, and a few other factors. We assume that a set of vehicles belonging to each class. Based on the values in Table \ref{Co2_tab}, we let  the constants $a=2.9$, $b=8.51$, and let $f_{1i}$, $f_{2i}$ be uniformly distributed random variables, where $f_{1i} \in [1, 1.5]$, $f_{2i} \in [1, 2]$, for all $i$. The cost function $g_i$ is listed in \eqref{bin_func}, where first and second terms represent $\mathrm{CO_2}$ emission at a basic assumed rate of charging of the battery, whereas third and subsequent terms are $\mathrm{CO_2}$ emission due to different charging losses or factors. We observe that no allocation of charging points produce zero $\mathrm{CO_2}$ emission, the cost functions are as follows,  
		\begin{equation} \label{bin_func} g_{i}(y_i^1,y_i^2)= \left\{
		\begin{array}{ll}
		(i) \text{  } a y_i^1 + b y_i^2 + af_{1i}(y_i^1)^2 + bf_{2i}(y_i^2)^4, \\
		(ii) \text{  } a y_i^1 + b y_i^2 + af_{1i}(y_i^1)^4/2 + bf_{2i}(y_i^2)^2, \\
		(iii) \text{  } a y_i^1 + b y_i^2 +  af_{1i}(y_i^1)^4/3 + af_{1i}(y_i^1)^6 +&\\ \quad \quad bf_{2i}(y_i^2)^4, \\
		(iv) \text{  } a y_i^1 + b y_i^2 + af_{1i}(y_i^1)^2 + bf_{2i}(y_i^2)^6.
		\end{array}
		\right.
		\end{equation}
		 Now, let the number of electric cars be $n=1200$ that use level $1$ and level $2$ chargers. We classify these cars as follows---cars $1$ to $300$ belong to class $1$, cars $301$ to $600$ belong to class $2$, cars $601$ to $900$ belong to class $3$, and cars $901$ to $1200$ belong to class $4$. Each class has a set of cost functions; the cost functions of class $1$ are presented in \eqref{bin_func}$(i)$ and analogously for other classes. Let $C^1 = 400$ and $C^2 = 500$. The parameters of the algorithms are initialized with the following values; $\Omega^1(0) = 0.328 $, $\Omega^2(0) = 0.35$, $\tau^1 = 0.0002275$, and $\tau^2= 0.0002125$. We use the proposed Algorithm \ref{algoCU2} and Algorithm \ref{algo3} to allocate charging points to $n$ electric cars that are coupled through level $1$ and level $2$ charging points. If a car user is looking for a free charging point, then it sends a request to the city agency in a probabilistic manner based on its private cost function $g_i$ and its previous average allocation of level $1$ and level $2$ charging points. Based on the request, the city agency allocates one of the charging points or both or none. Furthermore, the car users do not share their cost functions or history of their allocations with other car users or with the city agency. Notice a limitation of this application, following the proposed algorithm, in some cases; a car user can receive access to both level $1$ and level $2$ charging points for a single car, which may not be desired in real-life scenarios.
	 			  	
	We present simulation results of automatic allocation of charging points here. We observe that the electric car users receive optimal allocations of both types of charging points and minimize the overall $\mathrm{CO_2}$ emission. Moreover, we observe in Figure \ref{fig3_bin}(a) that the long-term average allocations of charging points of electric cars converge to their respective optimal values.

	As described earlier in \eqref{optimality_cond_b1}, to show the optimality of the solution, the derivatives of the cost functions of all the cars with respect to a particular type of charger should make a consensus. The profile of derivatives of cost functions of the cars with respect to level $1$ and level $2$ chargers for a single simulation is illustrated in Figure \ref{fig3_bin}(b) and \ref{fig3_bin}(c), respectively. We observe that they converge with time and hence make a consensus, which meets the KKT conditions for optimality. Note that we use third and subsequent terms of \eqref{bin_func} to calculate the derivative $\nabla_j g_i$ which shifts its value by constants $a$ or $b$ without affecting the KKT points, but it provides faster convergence in the simulation. The empirical results thus obtained, show the convergence of the long-term average allocations of charging points to their respective optimal values using the consensus of derivatives of the cost functions, which results in the optimum emission of $\mathrm{CO_2}$. We also observed that $\sigma_i^j(k)$ is in $(0,1)$ most of the time with the current values of $\Omega^j(k)$ and $\tau^j$ with a few initial overshoots. To overcome the overshoots of probability $\sigma_i^j(k)$, we use $\sigma_i^j(k) = \min \Big\{\Omega^j(k) \frac{ y^j_i(k)}{ \nabla_j{g_i({y}_i^1(k), {y}_i^2(k), \ldots, {y}_i^m(k))}}, 1 \Big\}$, for all $i, j$ and $k$.

	Figure \ref{sum_alloc_BAIMD}(a) illustrates the sum of the average allocations $\sum_{i=1}^{n} {y}_i^j(k)$ over time. We observe that the sum of the average allocations of charging points converge to respective capacity over time that is, for large $k$, $\sum_{i=1}^{n} {y}_i^j(k) \approx C^j$, for all $j$. We further illustrate the utilization of charging points for the last $60$ time steps in Figure \ref{sum_alloc_BAIMD}(b). It is observed that most of the time the total allocation of charging points is concentrated around its capacity. To reduce the overshoot of total allocation of level $j$ charging points, we assume a constant $\gamma^j <1$ and modify the algorithm of the city agency (control unit) to calculate $\Omega^j(k+1)$ (cf. \eqref{omega_b1}) in the following manner,
	\begin{align*}
		\Omega^j(k+1) = \Omega^j(k) - \tau^j \Big
		(\sum_{i=1}^n \xi^j_i(k) -\gamma^jC^j \Big ),
	\end{align*}
	for $j =1, 2$ and all $k$.

	\section{Conclusion} \label{conc} We proposed a new algorithm to solve a class of multi-variate resource allocation problems. The solution approach is distributed among the agents and requires no communication between agents and little communication with a central agent. Each agent can, therefore, keep its cost function private. This generalizes the unit-demand single resource allocation algorithm of \cite{Griggs2016}. In the single-resource case, based on a constant normalization factor, we showed that the long-term average allocations of a unit-demand resource converge to optimal values; multiple (unit-demand) resource case follows this result. Additionally, experiments show that the long-term average allocations converge rapidly to optimum values in the multi-resource case.

Open problems are to prove convergence with a time-varying normalization factor for single-resource as well as multi-resource cases. Another open problem is to analyze the rate of convergence. In terms of applications, our proposed approach can be used to allocate resources, such as Internet-of-Things (IoT) devices in hospitals, smart grids, to list a few. It can also be used to allocate virtual machines to users in cloud computing. 

	\bibliographystyle{IEEEtran}
	\bibliography{DistOpt_bib} 

\end{document}